\documentclass[lettersize,journal]{IEEEtran}
\usepackage{algorithmic}
\usepackage{algorithm}
\usepackage{array}
\usepackage[caption=false,font=normalsize,labelfont=sf,textfont=sf]{subfig}
\usepackage{textcomp}
\usepackage{stfloats}
\usepackage{url}
\usepackage{verbatim}

\usepackage{cite}
\usepackage{amsmath,amssymb,amsfonts}
\usepackage{amsthm}
\newtheorem{remark}{Remark}

\newtheorem{theorem}{Theorem}

\newtheorem{definition}{Definition}
\usepackage{algorithmic}
\usepackage{graphicx}
\usepackage{textcomp}
\usepackage{xcolor}
\usepackage{mathrsfs}
\def\BibTeX{{\rm B\kern-.05em{\sc i\kern-.025em b}\kern-.08em
    T\kern-.1667em\lower.7ex\hbox{E}\kern-.125emX}}
\usepackage{amssymb}
\usepackage{float}
\usepackage{tikz}

\allowdisplaybreaks

\begin{document}

\title{Channel-coded Over-the-Air Computation}

\author{Shudi Weng,~\IEEEmembership{Graduate Student Member,~IEEE,}
Ming Xiao,~\IEEEmembership{Senior Member,~IEEE,} and \\
Mikael Skoglund,~\IEEEmembership{Fellow,~IEEE}\vspace{-2em}
\thanks{

Shudi Weng, Ming Xiao, and Mikael Skoglund are with the Department of Information Science and Engineering (ISE), KTH Royal Institute of Technology, Stockholm 11428, Sweden (e-mail: \{shudiw, mingx, skoglund\}@kth.se).}
}

\markboth{Journal of \LaTeX\ Class Files,~Vol.~14, No.~8, August~2021}%
{Shell \MakeLowercase{\textit{et al.}}: A Sample Article Using IEEEtran.cls for IEEE Journals}


\maketitle
\begin{abstract}
This letter studies channel coding for over-the-air computation (AirComp). AirComp enables efficient wireless data aggregation, where computation accuracy is the key performance metric. 
However, this accuracy is sensitive to channel impairments.
As a promising solution, the role of channel coding in AirComp has been largely unexplored, creating a critical gap in achieving reliable AirComp systems.
To address this, we propose a novel channel coding scheme tailored for AirComp that preserves the aggregation structure while mitigating channel distortions. We show that the computation error decreases with the coding rate and can asymptotically approach zero.
Both theoretical and simulation results demonstrate that the proposed scheme significantly enhances computation performance. 
\end{abstract}

\begin{IEEEkeywords}
Over-the-air computation, Channel Coding. 
\end{IEEEkeywords}

\section{Introduction}
\IEEEPARstart{O}{ver}-the-air computation (AirComp) enables the computation of mathematical functions by exploiting the waveform superposition property of electromagnetic waves over multiple-access channels (MACs) \cite{csahin2023survey}, thus significantly improving communication efficiency and reducing resource consumption compared to the conventional communication-computation separation paradigm. 
The advantages highlight its large potential in 6G, where various intelligent networks demand extensive computation with massive data connected by wireless channels.
However, computation accuracy, the key performance metric in AirComp, is sensitive to channel impairments due to, e.g., noise and fading, posing a fundamental challenge to AirComp. 

To mitigate computation errors, recent works primarily rely on signal processing and transmission design techniques. For instance, \cite{9834026} optimizes transceiver design under channel impairments; \cite{jing2023transceiver,zhang2025beamforming} investigate beamforming strategies; and \cite{10893712} adopts bit-slicing multi-symbol transmission schemes. 
While these approaches improve AirComp to a certain extent, they remain fundamentally limited as they are based on uncoded AirComp, where source messages are transmitted directly for aggregation without any form of error correction.


In classic digital communication systems, channel coding has been extensively adapted under both orthogonal and non-orthogonal paradigms. In orthogonal transmission, it effectively combats channel noise and fading to ensure reliable communication. In interference channels, strategies such as structured coding and compute-and-forward \cite{nazer2011compute} exploit signal superposition to enhance multi-user efficiency. However, these approaches are intrinsically designed to recover individual messages, which fundamentally differ from the objective of AirComp that aims at a deterministic function computation via over-the-air (OTA) aggregation. This mismatch in transmission objectives renders most existing channel coding strategies not directly applicable to AirComp and leaves computation reliability insufficiently addressed.


Channel coding for AirComp remains underexplored. Pioneering works \cite{4305404} establish the foundation of computation over MACs, where lattice codes enable function computation by exploiting signal superposition. 
Building upon this, \cite{9895450} incorporates stochastic quantization into a lattice quantizer to ensure unbiasedness while reducing distortion in AirComp, and \cite{11358822} combines low-density parity-check (LDPC) codes with lattice-based modulation to further improve AirComp performance. 
Another line of work focuses on modulation design \cite{10622499,11303856}. 
Beyond these efforts, the channel coding design tailored to AirComp is in its infancy, highlighting a critical gap in achieving reliable AirComp under channel impairments.

\vspace{-0.5em}
\subsection{Contributions}
This work advances coded AirComp beyond existing lattice-coding based paradigms and applies to both digital and analog AirComp.
The main contributions are summarized as follows:
\begin{itemize}
\item We propose an efficient and simple channel coding strategy for AirComp, where all users apply an identical encoding matrix to preserve the summation structure and enhance computation through coding gain.
Both theoretical analysis and simulation results demonstrate that the proposed channel coding scheme effectively reduces the computation error as the coding rate decreases.
\item We characterize the statistical properties of the computation error for the proposed channel-coded AirComp, and derive the optimal coding design and its construction.
\item We develop a theoretical framework for channel coding problems in analog AirComp and characterize the achievable rate regions under various computation error criteria.
\end{itemize}

\vspace{-0.5em}
\subsection{Notations}
Non-bold letters denote scalars, bold lowercase letters denote vectors, and bold uppercase letters denote matrices. The superscript $(\cdot)^\mathsf{H}$ denotes the Hermitian transpose, $(\cdot)^{-1}$ denotes the matrix inverse, and $(\cdot)^\dagger$ denotes the Moore–Penrose pseudoinverse. The notation $\mathcal{CN}$ represents the complex Gaussian distribution, $\mathcal{E}$ denotes the exponential distribution, and $\Gamma$ denotes the Gamma distribution. The matrix $\mathbf{I}_L$ denotes the $L \times L$ identity matrix. The set $\mathbb{C}$ represents the complex domain. The operator $\mathbb{E}[\cdot]$ denotes expectation, $\mathrm{cov}(\cdot)$ denotes covariance, $\mathrm{var}(\cdot)$ denotes variance, $\mathrm{Pr}(\cdot)$ denotes probability, $\lvert \cdot \rvert$ denotes absolute value, $\lVert \cdot \rVert$ denotes $l_2$-norm, $\mathrm{tr}(\cdot)$ denotes the trace operator, $\mathrm{diag}(\cdot)$ denotes diagonal extraction operator, and $\mathrm{dim}(\cdot)$ denotes the dimension of a vector. The symbol $\overset{\mathrm{d}}{=}$ denotes equality in distribution.
\begin{figure*}[bt]
    \centering
    \includegraphics[width=0.8\linewidth]{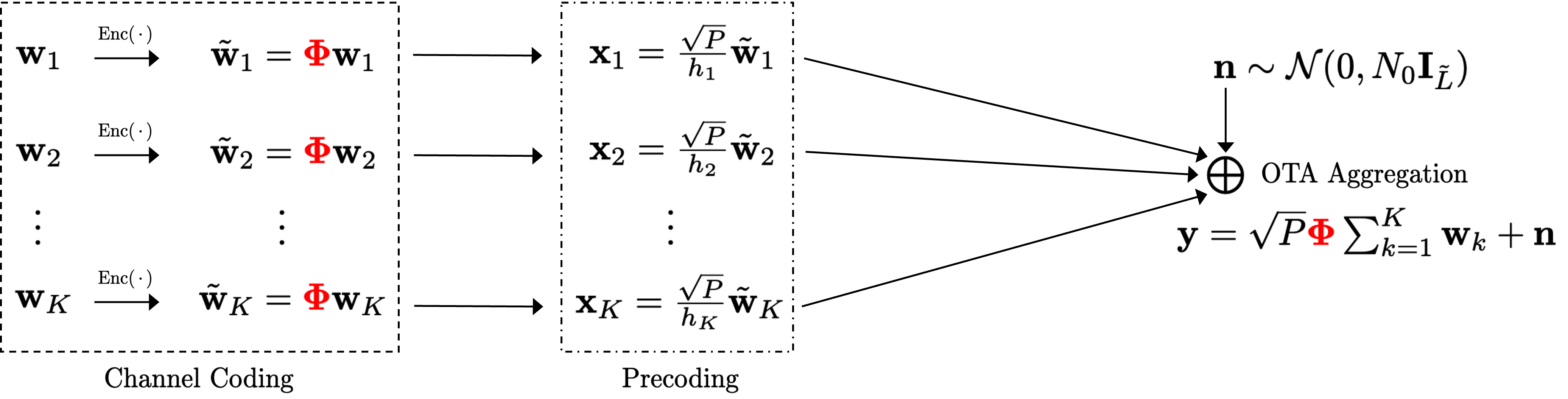}
    \caption{Overview of the proposed channel-coded AirComp scheme, in which an identical encoding matrix is applied by all users.}
    \vspace{-2mm}
    \label{fig: ChannelOTA}
\end{figure*}
\section{System Model and Preliminaries}
\subsection{Uncoded AirComp Protocol}
Consider an AirComp system consisting of $K$ users and a base station (BS). Each user \(k\) holds a source message \(\mathbf{w}_k \in \mathbb{C}^L\) to be transmitted to the BS. The BS aims to compute the sum $\mathbf{w}=\sum_{k\in[K]} \mathbf{w}_k$. 
Assume that the entries in $\mathbf{w}_k$ are identically and independently distributed (i.i.d.) with a per-letter power constraint $\mathbf{w}_k\sim\mathcal{CN}(0, P_W\mathbf{I}_L)$.
Let \(h_k \in \mathbb{C}\) denote the uplink channel coefficient from user \(k\) to the BS, and let $\mathbf{x}_k\in\mathbb{C}^L$ denote the transmit signal. 
Assume perfect channel state information at the transmitter (CSIT), using channel inversion precoding, user $k$ transmits
\begin{align}
    \mathbf{x}_k = \frac{\sqrt{P}}{h_k}\mathbf{w}_k,
    \qquad h_k \neq 0,
    \label{eq: tx_signal}
\end{align}
where \(P>0\) is a common power scaling factor and $\alpha_{k}\triangleq\frac{\sqrt{P}}{h_k}$ is the transmit scaling factor. Moreover, the average transmit power of client \(k\) is assumed to satisfy a power constraint, 
\begin{align}
    \frac{1}{\mathrm{dim}(\mathbf{x}_k)}\mathbb{E}\!\left[\|\mathbf{x}_k\|^2\right]
    = \frac{P}{|h_k|^2 \mathrm{dim}(\mathbf{x}_k)}\mathbb{E}\!\left[\|\mathbf{w}_k\|^2\right]
    \le P_X.
    \label{eq: power_constraint}
\end{align}
Thus, the signal received at the BS is
\begin{align}
    \mathbf{y}
    &= \sum_{k=1}^K h_k \mathbf{x}_k + \mathbf{n} \label{eq: y_signal1}\\
    &= \sqrt{P}\sum_{k=1}^K \mathbf{w}_k + \mathbf{n},
    \label{eq: y_signal2}
\end{align}
where the noise vector \(\mathbf{n} \in \mathbb{C}^L\) is independent of the transmit signal and follows $ \mathbf{n} \sim \mathcal{CN}(\mathbf{0}, N_0 \mathbf{I}_{\mathrm{dim}(\mathbf{x}_k)})$.
\subsection{Preliminaries}
\begin{definition}[Coding rate]
The rate $R_k$ of a channel coding scheme at transmitter $k$ is defined as the ratio between the source message length $L_k$ (measured in dimension) and the number of channel uses $\tilde{L}_k$, i.e.,
\begin{align}
    R_k \triangleq \frac{L_k}{\tilde{L}_k}.
\end{align}
In this work, we consider a symmetric setting where $R_1 = \cdots = R_K = \frac{L}{\tilde{L}} \triangleq R$.
\end{definition}
\begin{definition}[Computation error]
The computation error of an AirComp protocol is defined as the expected distortion between the true sum $\mathbf{w}$ and the estimated sum $\hat{\mathbf{w}}$, i.e.,
\begin{align}
    \gamma \triangleq \mathbb{E}\!\left[ d(\mathbf{w}, \hat{\mathbf{w}}) \right],
\end{align}
where $d(\cdot,\cdot)$ denotes a distortion measure. In this work, we adopt the mean square error (MSE).
\end{definition}
\begin{definition}[$\epsilon$-accurate in expectation]\label{def:e_accurate}
For a given channel realization $\mathbf{h}=(h_1,\dots,h_K)$ and a fixed channel-coded AirComp scheme, the $\epsilon$-accurate rate region $\mathcal{R}_\epsilon(\mathbf{h})$ is defined as the set of all rate tuples $(R_1,\dots,R_K)$ such that the resulting computation error satisfies
\begin{equation}
    \gamma \le \epsilon.
\end{equation}
\end{definition}
\begin{definition}[$\epsilon$-asymptotically accurate]
For a given channel realization $\mathbf{h}=(h_1,\dots,h_K)$ and a fixed channel-coded AirComp scheme, 
the $\epsilon$-accurate asymptotic rate region $\mathcal{R}_\epsilon^\infty(\mathbf{h})$ is defined as the set of all rate tuples $(R_1,\dots,R_K)$, where $R_k = \frac{L_k}{\tilde{L}_k}$, such that
\begin{equation}
\limsup_{\Tilde{L}\to\infty} d(\mathbf{w},\Hat{\mathbf{w}}) \le \epsilon.
\end{equation}
\end{definition}
\begin{definition}[$(\epsilon,\delta)$-accurate]
\label{def:e_del_accurate}
For a given channel realization $\mathbf{h}=(h_1,\dots,h_K)$, a fixed channel-coded AirComp scheme, and a fixed codeword length $\tilde{L}_k$, the $(\epsilon,\delta)$-accurate rate region $\mathcal{R}_{\epsilon}^{\delta}(\mathbf{h})$ is defined as the set of all rate tuples $(R_1,\dots,R_K)$ such that
\begin{equation}
    \Pr\!\left( d(\mathbf{w}, \hat{\mathbf{w}}) \le \epsilon \right) \ge 1-\delta.
\end{equation}
\end{definition}

\vspace{-0.5em}
\section{Proposed Channel-coded AirComp} \label{sec: proposed}
The proposed channel-coded AirComp scheme is illustrated in Fig. \ref{fig: ChannelOTA}. Each user first encode its source message $\mathbf{w}_k\in \mathbb{C}^{L}$ into a channel codeword $\Tilde{\mathbf{w}}_k\in\mathbb{C}^{\Tilde{L}}$ before transmission, i.e., $\Tilde{\mathbf{w}}_k=\mathrm{Enc}(\mathbf{w}_k)$, where $\mathrm{Enc}: \mathbb{C}^{L}\rightarrow \mathbb{C}^{\Tilde{L}}$. To enable accurate recovery of the aggregated signal, the encoding dimension must satisfy $\Tilde{L}\geq L$, i.e., $R\leq 1$. 

To enable accurate recovery of the aggregated signal after in-air superposition, one approach is to equip all users with an identical encoding matrix $\boldsymbol{\Phi}\in \mathbb{C}^{\Tilde{L}\times L}$. 
The encoding matrix is designed to satisfy two key properties:
(i) it preserves the total transmit power, ensuring that the encoded signal has the same power as the original uncoded message, so that any performance improvement is solely attributed to coding gain; and
(ii) any subset of $L$ rows of $\boldsymbol{\Phi}$ forms a full-rank submatrix, which guarantees decoding feasibility, i.e.,
\begin{subequations}
\begin{align}
&\mathrm{tr}(\boldsymbol{\Phi}^\mathsf{H} \boldsymbol{\Phi})=L,\\
&\forall\, \mathcal{S} \subseteq [\tilde{L}],\ |\mathcal{S}| = L,\quad
\operatorname{rank}(\boldsymbol{\Phi}_{\mathcal{S},:}) = L.\label{eq: rank}
\end{align}
\label{eq: Phi_constraint}
\end{subequations}
At user $k$, the channel codeword is generated by
\begin{align}
    \Tilde{\mathbf{w}}_k= \boldsymbol{\Phi} \mathbf{w}_k.
\end{align}
It can be verified that $\mathbb{E}[\lVert \boldsymbol{\Phi} \mathbf{w}_k \rVert^2]=\mathbb{E}[\lVert \mathbf{w}_k \rVert^2]=LP_W$. 
Then, each user generates a transmit signal $\mathbf{x}_k$ using channel inversion precoding as in \eqref{eq: tx_signal}, and transmit $\mathbf{x}_k$ to the BS,
\begin{align}
    \mathbf{x}_k = \frac{\sqrt{P}}{h_k}\Tilde{\mathbf{w}}_k,
    \qquad h_k \neq 0.
    \label{eq: tx_signal_channel_coding}
\end{align}
Following \eqref{eq: y_signal1}, the BS receives 
\begin{align}
    \mathbf{y}
    &= \sqrt{P} \boldsymbol{\Phi} \sum_{k=1}^K \mathbf{w}_k + \mathbf{n}.
\end{align}
Thus, we logically transfer an MAC channel into a single channel. Accordingly, the BS can decode the sum as 
\begin{align}
    \hat{\mathbf{w}}=\frac{\boldsymbol{\Phi}^{\dagger}\mathbf{y}}{\sqrt{P}}=\sum_{k=1}^K \mathbf{w}_k+ \frac{\boldsymbol{\Phi}^{\dagger}\mathbf{n}}{\sqrt{P}}. 
    \label{eq: final_est}
\end{align}
Following \eqref{eq: power_constraint}, the transmission power is constrained by
\begin{align}
    &\frac{1}{\Tilde{L}} \mathbb{E}\left[ \left\lVert \frac{\sqrt{P}}{h_k} \mathbf{\Phi} \mathbf{w}_k \right\rVert^2 \right]
    =\frac{L P P_W}{\Tilde{L}\lvert h_k\rvert^2}
    \leq P_X, \quad \forall k.
    \label{eq: coding_power_constraint}
\end{align}
Accordingly, the common power scaling factor should satisfy $P\leq \frac{P_X \min_k\{\lvert h_k\rvert^2 \}}{R P_W}$. 


\vspace{-0.5em}
\section{Distortion Analysis} \label{sec: Distortion Analysis}
This section characterizes the statistical property of the resulting MSE in a closed form. For simplicity of notations, define the transmission signal-to-noise ratio (SNR) by $\rho_X\triangleq\frac{P_X}{N_0}$, and the normalized SNR by $\rho\triangleq\frac{P}{N_0}$.  
\vspace{-0.5em}
\subsection{MSE in Expectation}
Define the effective noise in the final estimation in \eqref{eq: final_est} by $\mathbf{n}_{\mathrm{eff}}\triangleq\frac{\boldsymbol{\Phi}^{\dagger}\mathbf{n}}{\sqrt{P}}$, $\mathbf{n}_{\mathrm{eff}}\in \mathbb{C}^L$. Then, we have
\begin{align}
    \mathrm{cov}(\mathbf{n}_{\mathrm{eff}})
    &\hspace{-0.5mm}= \hspace{-0.5mm}\frac{1}{P}\boldsymbol{\Phi}^{\dagger}
    \mathrm{cov}(\mathbf{n})
    (\boldsymbol{\Phi}^{\dagger})^\mathsf{H} 
    \hspace{-1mm}= \hspace{-0.5mm}\frac{N_0}{P}\hspace{-0.5mm} \boldsymbol{\Phi}^{\dagger}
    (\boldsymbol{\Phi}^{\dagger})^\mathsf{H}
    \hspace{-1mm}=\hspace{-0.5mm}\frac{1}{\rho} (\boldsymbol{\Phi}^\mathsf{H} \boldsymbol{\Phi} )^{-1}\hspace{-0.5mm}.
    \label{eq: cov_effect_noise}
\end{align}
Thus, the computation error measured by the MSE of the proposed channel-coded AirComp is given by
\begin{align}
    d(\mathbf{w}, \hat{\mathbf{w}})=\frac{1}{L} \lVert \hat{\mathbf{w}}-\mathbf{w} \rVert^2=\frac{1}{L} \lVert \mathbf{n}_{\mathrm{eff}} \rVert^2,
\end{align}
and its expectation is 
\begin{align}
    \gamma = \mathbb{E}\left[ d(\mathbf{w}, \hat{\mathbf{w}}) \right]
    =\frac{1}{L\rho} \mathrm{tr}\left( (\boldsymbol{\Phi}^\mathsf{H} \boldsymbol{\Phi} )^{-1} \right), \label{eq: mse} 
\end{align}
where the last equality in \eqref{eq: mse} applies \eqref{eq: cov_effect_noise}. 
\vspace{-0.5em}
\subsection{MSE in Distribution } \label{sec:mse_distribution}
Since $\boldsymbol{\Phi}^\mathsf{H} \boldsymbol{\Phi}$ is Hermitian positive definite, it admits the eigenvalue decomposition $\boldsymbol{\Phi}^\mathsf{H}\boldsymbol{\Phi} = \mathbf{U} \boldsymbol{\Lambda} \mathbf{U}^\mathsf{H}$, where $\mathbf{U}$ is unitary and $\boldsymbol{\Lambda}=\mathrm{diag}(\lambda_1,\dots,\lambda_L)$. Accordingly, $(\boldsymbol{\Phi}^\mathsf{H}\boldsymbol{\Phi})^{-1} = \mathbf{U} \boldsymbol{\Lambda}^{-1} \mathbf{U}^\mathsf{H}$, where $\boldsymbol{\Lambda}^{-1}=\mathrm{diag}(\frac{1}{\lambda_1},\dots,\frac{1}{\lambda_L})$. Therefore,
\begin{align}
    \mathrm{cov}(\mathbf{n}_{\mathrm{eff}})=\frac{1}{\rho}
    \mathbf{U} \boldsymbol{\Lambda}^{-1} \mathbf{U}^\mathsf{H}.
\end{align}
Following this, $\mathbf{n}_{\mathrm{eff}}$ can be rewritten as
\begin{align}
    \mathbf{n}_{\mathrm{eff}}&\overset{\mathrm{d}}{=}\sqrt{\frac{1}{\rho}}  \mathbf{U} \boldsymbol{\Lambda}^{-\frac{1}{2}} \mathbf{z},
\end{align}
where $\mathbf{z}\sim\mathcal{CN}(\mathbf{0}, \mathbf{I}_L)$. Since $\mathbf{U}$ is unitary and $\mathbf{z}$ is circularly symmetric, $ \mathbf{U}\mathbf{z}\overset{\mathrm{d}}{=}\mathbf{z}$. It follows that 
\begin{align}
    \frac{1}{L}\lVert\mathbf{n}_{\mathrm{eff}}\rVert^2 &\overset{\mathrm{d}}{=}\frac{1}{\rho L} \mathbf{z}^\mathsf{H} \boldsymbol{\Lambda}^{-1} \mathbf{z}\overset{\mathrm{d}}{=}\frac{1}{\rho L}\sum_{l=1}^L \frac{1}{\lambda_l} \lvert z_l \rvert^2, 
    \label{eq: general_dist_pdf}
\end{align}
where $\lvert z_1 \rvert^2, \cdots, \lvert z_L \rvert^2 \overset{\mathrm{i.i.d.}}{\sim}\mathcal{E}(1)$.

\vspace{-0.5em}
\section{Optimal Coding Scheme}\label{sec: optimal scheme}
This section presents the principles, properties, rate regions, and construction of the optimal coding design. 
\vspace{-0.5em}
\subsection{Optimal Coding Scheme}
Given a source message length $L$ and a coding rate $R$, and a power constraint, the optimal coding scheme with prefect CSIT is defined as the one that minimizes MSE. This leads us to the following optimization problem:
\begin{align}
   (\textbf{P1})\quad &\min_{\mathbf{\Phi}, P}\quad \eqref{eq: mse}, \label{eq: opt_prob1}\\
    &\hspace{1mm}\mathrm{s.t.} \quad \eqref{eq: coding_power_constraint} \quad \mathrm{and} \quad \eqref{eq: Phi_constraint}.\notag
\end{align}
By \eqref{eq: mse} and \eqref{eq: coding_power_constraint}, the minimum of \eqref{eq: mse} is attained at $P^*= \frac{P_X \min_k\{\lvert h_k\rvert^2 \}}{R P_W}$. Accordingly, define $\rho^*\triangleq\frac{P^*}{N_0}=\frac{\rho_X \min_k\{\lvert h_k\rvert^2 \}}{R P_W}$. By the property of $\boldsymbol{\Phi}^\mathsf{H} \boldsymbol{\Phi}$ in Section \ref{sec:mse_distribution}, the optimization problem (\textbf{P1}) can be further simplified to
 \vspace{-1mm}
\begin{align}
       (\textbf{P2})\quad &\min_{\mathbf{\Phi}}\quad \sum_{l=1}^L \frac{1}{\lambda_l}, \label{eq: opt_prob2}\\
    &\hspace{1mm}\mathrm{s.t.} \quad \sum_{l=1}^L \lambda_l =L, \quad \mathrm{and} \quad \eqref{eq: rank}. \notag
    \vspace{-1mm}
\end{align}
This leads us to the following theorem.
\begin{theorem}(Optimal coding scheme) \label{theo: Opt_design}
Given a source message length $L$ and a coding rate $R=\frac{L}{\Tilde{L}}$, the optimal coding scheme is achieved by any
$\boldsymbol{\Phi} \in \mathbb{C}^{\tilde{L} \times L}$ satisfying \eqref{eq: rank} and 
\begin{align}
    \boldsymbol{\Phi}^\mathsf{H} \boldsymbol{\Phi} =  \mathbf{I}_L. 
    \label{eq: opt_principle}
\end{align}
The MSE in expectation of the optimal coding scheme is 
\begin{align}
    \gamma_{\mathrm{opt}}=\frac{1}{\rho^*}= \frac{ R P_W}{ \rho_X \min_k\left\{ \lvert h_k\rvert^2 \right\}}. 
    \label{eq: mse_opt}
\end{align}
The MSE follows $\Gamma$-distribution, 
\begin{align}
d(\mathbf{w}, \hat{\mathbf{w}})\sim \Gamma \left(L, \frac{P_W}{\Tilde{L}\rho_X \min_k\{\lvert h_k\rvert^2 \}}\right). 
\label{eq: opt_mse_distr}
\end{align} 
\end{theorem}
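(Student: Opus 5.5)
Since (\textbf{P1}) has already been reduced to (\textbf{P2}), the plan is to solve (\textbf{P2}) directly, then evaluate \eqref{eq: mse} and \eqref{eq: general_dist_pdf} at the optimizer. For the optimization step I would use the Cauchy--Schwarz (equivalently AM--HM) inequality on the eigenvalues $\lambda_1,\dots,\lambda_L>0$ of $\boldsymbol{\Phi}^\mathsf{H}\boldsymbol{\Phi}$:
\begin{align}
L^2=\Big(\sum_{l=1}^L \sqrt{\lambda_l}\cdot\tfrac{1}{\sqrt{\lambda_l}}\Big)^2\le \Big(\sum_{l=1}^L\lambda_l\Big)\Big(\sum_{l=1}^L\tfrac{1}{\lambda_l}\Big)=L\sum_{l=1}^L\tfrac{1}{\lambda_l}.\notag
\end{align}
This gives $\sum_l 1/\lambda_l\ge L$, with equality if and only if all $\lambda_l$ are equal, i.e. $\lambda_l=1$. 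Since $\boldsymbol{\Phi}^\mathsf{H}\boldsymbol{\Phi}=\mathbf{U}\boldsymbol{\Lambda}\mathbf{U}^\mathsf{H}$, the condition $\boldsymbol{\Lambda}=\mathbf{I}_L$ is equivalent to \eqref{eq: opt_principle}. So any $\boldsymbol{\Phi}$ satisfying \eqref{eq: opt_principle} together with \eqref{eq: rank} attains the lower bound, and every minimizer must satisfy \eqref{eq: opt_principle}. Positivity of the $\lambda_l$ follows from \eqref{eq: rank}, which forces full column rank.

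The main obstacle is feasibility: the constraint \eqref{eq: rank} must be compatible with \eqref{eq: opt_principle}. Otherwise the bound is not attained and the theorem would be vacuous. I would show this by construction. Take $L$ columns of a $\tilde{L}\times\tilde{L}$ DFT matrix, normalized by $1/\sqrt{\tilde{L}}$, choosing them as consecutive frequencies $0,\dots,L-1$. The columns are orthonormal, so $\boldsymbol{\Phi}^\mathsf{H}\boldsymbol{\Phi}=\mathbf{I}_L$. Any $L$ rows then form a scaled Vandermonde matrix with distinct nodes $e^{-j2\pi i/\tilde{L}}$, $i\in\mathcal{S}$, which is nonsingular, so \eqref{eq: rank} holds. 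As a fallback, a generic unitary rotation argument would also work: the rows of a Haar-random isometry are almost surely in general position.

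With optimality and feasibility settled, the remaining claims are direct evaluations. Substituting $\mathrm{tr}(\mathbf{I}_L)=L$ and $\rho=\rho^*$ into \eqref{eq: mse} gives $\gamma_{\mathrm{opt}}=1/\rho^*=\frac{RP_W}{\rho_X\min_k|h_k|^2}$, which is \eqref{eq: mse_opt}. For the distribution, \eqref{eq: general_dist_pdf} with all $\lambda_l=1$ gives
\begin{align}
d(\mathbf{w},\hat{\mathbf{w}})\overset{\mathrm{d}}{=}\frac{1}{\rho^* L}\sum_{l=1}^L|z_l|^2 .\notag
\end{align}
A sum of $L$ i.i.d. $\mathcal{E}(1)$ variables is $\Gamma(L,1)$, and scaling a Gamma variable multiplies its scale parameter. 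This yields shape $L$ and scale $\frac{1}{\rho^* L}=\frac{RP_W}{L\rho_X\min_k|h_k|^2}=\frac{P_W}{\tilde{L}\rho_X\min_k|h_k|^2}$ using $R=L/\tilde{L}$, which is exactly \eqref{eq: opt_mse_distr}.
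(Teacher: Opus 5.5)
Your proof is correct and follows essentially the same route as the paper, which lower-bounds $\sum_{l}1/\lambda_l$ by Jensen's inequality for the convex function $1/x$ (equivalent to your Cauchy--Schwarz/AM--HM step), concludes $\lambda_l=1$, and then substitutes $\lambda_l=1$ and $\rho^*$ into \eqref{eq: mse} and \eqref{eq: general_dist_pdf} exactly as you do. Your explicit feasibility check via consecutive DFT columns (every $L\times L$ row-submatrix is a Vandermonde matrix with distinct nodes) is a useful addition: the paper only addresses the compatibility of \eqref{eq: rank} with \eqref{eq: opt_principle} later, through a random QR construction that satisfies \eqref{eq: rank} almost surely rather than deterministically.
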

\begin{proof}[Proof Sketch]
Consider $(\textbf{P2})$. Since $f(x)=\frac{1}{x}$ is convex for $x>0$, by Jensen's inequality, $\frac{1}{L} \sum_{l=1}^L \frac{1}{\lambda_l}\geq  \frac{1}{\frac{1}{L} \sum_{l=1}^L\lambda_l}\geq 1$.
The equality holds if and only if $\lambda_l=1$ for $\forall l$. 
When $\boldsymbol{\Lambda}=\mathbf{I}_L$, we have $\boldsymbol{\Phi}^\mathsf{H}\boldsymbol{\Phi} 
    = \mathbf{U} \left(\mathbf{I}_L\right) \mathbf{U}^\mathsf{H}
    = \mathbf{U}\mathbf{U}^\mathsf{H}
    = \mathbf{I}_L$.
This completes the proof of \eqref{eq: opt_principle}. Combine this with (\textbf{P1}), \eqref{eq: mse_opt} is derived. 
Substitute $\lambda_l=1$ and $P^*$ into \eqref{eq: general_dist_pdf}, we obtain $\frac{1}{L}\lVert\mathbf{n}_{\mathrm{eff}}\rVert^2 \overset{\mathrm{d}}{=}\frac{1}{\rho^*}\sum_{l=1}^L
\lvert z_l \rvert^2 \sim \Gamma\left( L, \frac{1}{L\rho^*} \right)$ and completes \eqref{eq: opt_mse_distr}. 
\end{proof}
\begin{remark}
From \eqref{eq: mse_opt}, the computation error decreases as the coding rate $R$ reduces, as expected. If the channel gains are uniformly bounded, i.e., $\lvert h_k\rvert^2 \geq \varepsilon > 0$, then $\gamma_{\mathrm{opt}} \rightarrow 0$ as $R \rightarrow 0$. 
Notably, the expected MSE distortion is not directly dependent on $\tilde{L}$. However, a larger $\tilde{L}$ reduces the variance of the MSE, thereby improving its concentration around its mean.
\end{remark}
\begin{remark}
Notably, retransmission \cite{10622499} is a special case of the proposed method, corresponding to $\boldsymbol{\Phi} = \mathbf{I}_L$. 
\end{remark}
Next, we present the achievable rate regions of the optimal coding scheme under distortion criteria in Definitions~\ref{def:e_accurate}--\ref{def:e_del_accurate}.
\begin{theorem}[$\epsilon$-accurate rate region]
For a given channel realization $\mathbf{h}=(h_1,\dots,h_K)$, the $\epsilon$-accurate rate region achieved by the optimal coding scheme in AirComp is
\begin{align}
\mathcal{R}_\epsilon(\mathbf{h})
=\left\{R: 0\leq R\leq \min\left\{ 1, \frac{\epsilon \rho_X \min_{k} \{\lvert h_k\rvert^2\}}{P_W}\right\}\right\}. 
\end{align}
\end{theorem}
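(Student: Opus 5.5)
The plan is to reduce the rate region to a direct inequality on $R$, using the closed-form expected MSE from Theorem~\ref{theo: Opt_design}. Under the optimal coding scheme, i.e., any $\boldsymbol{\Phi}$ satisfying \eqref{eq: rank} and \eqref{eq: opt_principle} with the optimal power scaling $P^*$, the expected computation error is given by \eqref{eq: mse_opt}:
\begin{align}
\gamma_{\mathrm{opt}}=\frac{R P_W}{\rho_X \min_k\{\lvert h_k\rvert^2\}}.
\end{align}
By Definition~\ref{def:e_accurate}, a rate $R$ (symmetric across users) belongs to $\mathcal{R}_\epsilon(\mathbf{h})$ if and only if $\gamma_{\mathrm{opt}}\le\epsilon$. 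Since $\gamma_{\mathrm{opt}}$ is linear and strictly increasing in $R$ for a fixed channel realization with $\min_k\lvert h_k\rvert^2>0$, this condition is equivalent to
\begin{align}
R\le \frac{\epsilon\rho_X\min_k\{\lvert h_k\rvert^2\}}{P_W}.
\end{align}

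The second step intersects this with the structural constraint on the code. Section~\ref{sec: proposed} requires $\tilde{L}\ge L$, i.e., $R\le 1$, for $\boldsymbol{\Phi}\in\mathbb{C}^{\tilde{L}\times L}$ to have full column rank, so that $\boldsymbol{\Phi}^\mathsf{H}\boldsymbol{\Phi}=\mathbf{I}_L$ and \eqref{eq: rank} are feasible. Nonnegativity $R\ge0$ holds trivially. Combining the two constraints yields exactly the stated $\min\{1,\cdot\}$ upper bound.

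For achievability, I would check that every $R=L/\tilde{L}$ in this range admits a $\boldsymbol{\Phi}$ meeting both \eqref{eq: opt_principle} and \eqref{eq: rank}. A generic isometry works, for instance the first $L$ columns of a Haar-random unitary matrix, or $L$ columns of a scaled DFT matrix. Either has orthonormal columns, and with probability one, or by a Vandermonde argument for the DFT, every $L\times L$ row-submatrix is nonsingular.

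I expect this achievability check to be the only mildly delicate point. I would handle it with the Vandermonde/DFT argument and treat rates outside the rational grid $L/\tilde{L}$ as understood in the limiting sense. Because the optimal scheme is fixed in the definition, there is no converse issue: the region is precisely the set of rates for which \eqref{eq: mse_opt} is at most $\epsilon$.
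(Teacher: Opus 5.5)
Your proposal is correct and follows the paper's argument: impose $\gamma_{\mathrm{opt}}\le\epsilon$ on \eqref{eq: mse_opt}, solve the linear inequality for $R$, and intersect with $R\le 1$ from $\tilde{L}\ge L$. The paper establishes existence of a $\boldsymbol{\Phi}$ satisfying \eqref{eq: opt_principle} and \eqref{eq: rank} through the random-QR construction of Algorithm~\ref{alg:phi}, which is the same almost-sure argument as your Haar-isometry option; for your DFT alternative, though, you must take $L$ \emph{consecutive} columns (or an arithmetic progression with step coprime to $\tilde{L}$), because an arbitrary choice of $L$ DFT columns can give singular $L\times L$ row-minors when $\tilde{L}$ is composite (e.g., $\tilde{L}=4$ with columns $0$ and $2$ and rows $0$ and $2$).
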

\begin{proof}[Proof Sketch]
    Let $\gamma_{\mathrm{opt}}\leq \epsilon$, $\mathcal{R}_\epsilon(\mathbf{h})$ is derived. 
\end{proof}

\begin{theorem}[$\epsilon$-asymptotically accurate rate region]  \label{theo: asymptot_rate}
For a given channel realization $\mathbf{h}=(h_1,\dots,h_K)$, if the channel gains are uniformly bounded, i.e., $\lvert h_k\rvert^2\geq \varepsilon> 0$,  and $\Tilde{L}$ is sufficiently large, the $\epsilon$-asymptotically accurate rate region of the optimal coding scheme is characterized by
\begin{align}
    \mathcal{R}^\infty_\epsilon(\mathbf{h})
=\left\{R: 0\leq R\leq \min\left\{ 1, \frac{\epsilon \rho_X \min_{k} \{\lvert h_k\rvert^2\}}{P_W}\right\}\right\}.
\end{align}
\end{theorem}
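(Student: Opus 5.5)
We use the distributional characterization in Theorem~\ref{theo: Opt_design}: under the optimal design ($\boldsymbol{\Phi}^\mathsf{H}\boldsymbol{\Phi}=\mathbf{I}_L$, $P=P^*$) the distortion is a normalized sum of i.i.d. exponentials. Concretely,
\begin{align}
d(\mathbf{w},\hat{\mathbf{w}})\overset{\mathrm{d}}{=}\frac{1}{\rho^*}\cdot\frac{1}{L}\sum_{l=1}^{L}\lvert z_l\rvert^2,\qquad \lvert z_l\rvert^2\overset{\mathrm{i.i.d.}}{\sim}\mathcal{E}(1),
\end{align}
with $\frac{1}{\rho^*}=\gamma_{\mathrm{opt}}=\frac{R P_W}{\rho_X\min_k\{\lvert h_k\rvert^2\}}$. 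For fixed $R$, letting $\tilde{L}\to\infty$ forces $L=R\tilde{L}\to\infty$ (assume $R>0$; the case $R=0$ is trivial). The prefactor $1/\rho^*$ depends only on $R$ and $\mathbf{h}$, not on $\tilde{L}$. Hence the whole question reduces to the limiting behaviour of the empirical mean $\frac{1}{L}\sum_{l}\lvert z_l\rvert^2$.

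The first step is to show that $d(\mathbf{w},\hat{\mathbf{w}})\to\gamma_{\mathrm{opt}}$ as $\tilde{L}\to\infty$. The mean is $\gamma_{\mathrm{opt}}$ and, from the $\Gamma$ law \eqref{eq: opt_mse_distr}, the variance is
\begin{align}
L\left(\frac{P_W}{\tilde{L}\rho_X\min_k\lvert h_k\rvert^2}\right)^2=\frac{\gamma_{\mathrm{opt}}^2}{L}\to 0 .
\end{align}
Chebyshev therefore gives convergence in probability. To get almost sure convergence, which is the natural reading of the $\limsup$ in the definition, I would couple all block lengths on one i.i.d. sequence $\{z_l\}$ and invoke the strong law of large numbers. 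Alternatively, I would use the Chernoff bound for $\Gamma(L,\cdot)$, whose tails decay exponentially in $L$, together with Borel--Cantelli. Either route yields $\limsup_{\tilde{L}\to\infty} d=\lim d=\gamma_{\mathrm{opt}}$.

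The second step is to read off the region. The condition $\limsup d\le\epsilon$ is equivalent to $\gamma_{\mathrm{opt}}\le\epsilon$, that is, $R\le \epsilon\rho_X\min_k\lvert h_k\rvert^2/P_W$. Combining this with the feasibility constraint $R\le 1$ (required so that $\tilde{L}\ge L$ and \eqref{eq: rank} can hold) gives exactly the stated set. This is the same set as $\mathcal{R}_\epsilon(\mathbf{h})$, now attained pathwise rather than only in expectation. The hypothesis $\lvert h_k\rvert^2\ge\varepsilon>0$ keeps $\rho^*$ finite and positive, so $\gamma_{\mathrm{opt}}$ is a finite constant and the limit is nondegenerate. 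It also ensures the region contains a nontrivial interval $[0,\min\{1,\epsilon\rho_X\varepsilon/P_W\}]$.

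The main obstacle is the probabilistic meaning of the $\limsup$: the definition is stated for a random quantity without specifying the mode of convergence. I would state the result almost surely via the SLLN coupling, and remark that it also holds in probability by Chebyshev. A second subtlety is the boundary case $R=\epsilon\rho_X\min_k\lvert h_k\rvert^2/P_W$: there the limit equals $\epsilon$ exactly, so the non-strict inequality is still satisfied almost surely. This is consistent with the closed interval in the statement.
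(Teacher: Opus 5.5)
Your proposal is correct and follows essentially the same argument as the paper. Both use the $\Gamma$ law to get variance $\gamma_{\mathrm{opt}}^2/L\to 0$ for fixed $R$, conclude that $d(\mathbf{w},\hat{\mathbf{w}})$ concentrates at $\gamma_{\mathrm{opt}}$, and then impose $\gamma_{\mathrm{opt}}\le\epsilon$ together with $R\le 1$. Your handling of the mode of convergence is more careful than the paper's sketch, which simply asserts $d\to\mathbb{E}[d]$ without naming the mode. Of your two almost-sure routes, prefer the Chernoff-plus-Borel--Cantelli one, since it holds however the different block lengths are jointly realized, whereas the SLLN argument depends on the particular coupling you choose.
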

\begin{proof}[Proof Sketch]
From \eqref{eq: opt_mse_distr}, we get
\begin{align}
    \mathrm{var}\left(d(\mathbf{w}, \hat{\mathbf{w}})\right)=L\cdot \left(\frac{P_W}{\Tilde{L}\rho_X\min_k\{\lvert h_k\rvert^2 \}}\right)^2. 
\end{align}
Fix $R$, when $\Tilde{L}\rightarrow \infty$, $\mathrm{var}\left(d(\mathbf{w}, \hat{\mathbf{w}})\right)\rightarrow 0$, hence $d(\mathbf{w}, \hat{\mathbf{w}})\rightarrow \mathbb{E}[d(\mathbf{w}, \hat{\mathbf{w}})]=\gamma_{\mathrm{opt}}$. Let $\gamma_{\mathrm{opt}}\leq \epsilon$, $ \mathcal{R}^\infty_\epsilon(\mathbf{h})$ is derived.  
\end{proof}

\begin{theorem}[$(\epsilon, \delta)$-accurate rate region]
\label{theo: Opt_ep_del}
Consider a given channel realization $\mathbf{h}=(h_1,\dots,h_K)$.
Let $\eta>0$ and define distortion threshold $\epsilon=(1+\eta)\gamma_{\mathrm{opt}}$. If the source message length $L$ satisfies 
\begin{align}
    L\geq \frac{\ln (1/\delta)}{\eta-\ln(1+\eta)},
    \label{eq: souce_length}
\end{align}
by adopting the optimal coding scheme, the following achievable $(\epsilon, \delta)$-accurate rate region establishes, 
\begin{align}
    \mathcal{R}_{\epsilon}^{\delta}(\mathbf{h})
    = \left\{R: 0\leq R\leq \min\left\{ 1, \frac{\epsilon \rho_X \min_{k} \{\lvert h_k\rvert^2\}}{(1+\eta)P_W }\right\}\right\}.
    \label{eq: rate_ep_del}
\end{align}
\end{theorem}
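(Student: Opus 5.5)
Under the optimal coding scheme of Theorem~\ref{theo: Opt_design}, the distortion is a scaled sum of $L$ i.i.d. exponentials. We bound its upper tail with a Chernoff bound, and then translate the required bound on $\gamma_{\mathrm{opt}}$ into a bound on $R$. From \eqref{eq: opt_mse_distr} and its derivation, $d(\mathbf{w},\hat{\mathbf{w}})\overset{\mathrm{d}}{=}\frac{\gamma_{\mathrm{opt}}}{L}S$, where $S=\sum_{l=1}^L\lvert z_l\rvert^2\sim\Gamma(L,1)$ and $\mathbb{E}[S]=L$. The event $d(\mathbf{w},\hat{\mathbf{w}})>(1+\eta)\gamma_{\mathrm{opt}}$ is therefore exactly the event $S>(1+\eta)L$. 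This form does not depend on $\rho_X$, $\mathbf{h}$ or $R$; those enter only through $\gamma_{\mathrm{opt}}$.

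For the tail, Markov's inequality applied to $e^{tS}$ with $0<t<1$, together with the moment generating function $\mathbb{E}[e^{tS}]=(1-t)^{-L}$, gives
\begin{align}
\Pr\left(S>(1+\eta)L\right)\le \exp\left(-L\left[t(1+\eta)+\ln(1-t)\right]\right).
\end{align}
Choosing $t=\frac{\eta}{1+\eta}$ (the optimizer) makes the exponent equal to $-L\left(\eta-\ln(1+\eta)\right)$. Since $\eta-\ln(1+\eta)>0$ for $\eta>0$, condition \eqref{eq: souce_length} is precisely what is needed for this bound to be at most $\delta$. Hence $\Pr\left(d(\mathbf{w},\hat{\mathbf{w}})\le(1+\eta)\gamma_{\mathrm{opt}}\right)\ge1-\delta$.

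It remains to turn this into the rate region. The guarantee must hold at the prescribed threshold $\epsilon$, so we need $(1+\eta)\gamma_{\mathrm{opt}}\le\epsilon$; by monotonicity, the distortion event at threshold $(1+\eta)\gamma_{\mathrm{opt}}$ is then contained in the event at threshold $\epsilon$. Substituting $\gamma_{\mathrm{opt}}=\frac{RP_W}{\rho_X\min_k\lvert h_k\rvert^2}$ from \eqref{eq: mse_opt} gives $R\le\frac{\epsilon\rho_X\min_k\lvert h_k\rvert^2}{(1+\eta)P_W}$. Intersecting with the feasibility requirement $R\le1$ (since $\tilde L\ge L$) yields \eqref{eq: rate_ep_del}.

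The main subtlety is interpretive rather than technical. The statement defines $\epsilon=(1+\eta)\gamma_{\mathrm{opt}}$, yet the region treats $\epsilon$ as a fixed target. I would resolve this by reading $\epsilon$ as fixed and $R$ as chosen so that $(1+\eta)\gamma_{\mathrm{opt}}(R)\le\epsilon$, with equality at the boundary rate. Everything else is a standard Chernoff computation.
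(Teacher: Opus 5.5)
Your proof is correct and follows the paper's argument. The paper likewise applies the Gamma upper-tail Chernoff bound $\Pr(d\ge(1+\eta)\gamma_{\mathrm{opt}})\le\exp(-L(\eta-\ln(1+\eta)))$ to the distribution in \eqref{eq: opt_mse_distr}, and then imposes $\epsilon\ge(1+\eta)\gamma_{\mathrm{opt}}$ (your reading of $\epsilon$ as a fixed target) to get the rate bound; the only difference is that you derive the Chernoff bound from the moment generating function with the optimizing $t=\eta/(1+\eta)$, whereas the paper simply quotes it.
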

\begin{proof}[Proof Sketch]
For a Gamma distribution $x\sim\Gamma(\alpha, \theta)$, the Chernoff upper tail bound is 
\begin{align}
    \mathrm{Pr}(x\geq (1+\eta)\mu)\leq \mathrm{exp}(-\alpha(\eta-\ln(1+\eta))). 
    \label{eq: chernoff}
\end{align}
For the MSE distribution in \eqref{eq: opt_mse_distr}, $\mu=\gamma_{opt}$, $\alpha=L$. If $\epsilon\geq (1+\eta)\gamma_{opt}$ and $\delta\geq \mathrm{exp}(-L(\eta-\ln(1+\eta)))$, by \eqref{eq: chernoff}, $\mathrm{Pr}(x\leq \epsilon)\geq 1-\delta$. This completes the proof. 
\end{proof}
It is worth noting that the $(\epsilon, \delta)$-accurate rate region is solely defined by \eqref{eq: rate_ep_del}, which follows from the expected MSE constraint. The condition on $L$ in \eqref{eq: souce_length} does not affect the rate region, instead, 
it is used to control the concentration rate of the distortion and ensure that the Chernoff bound is sufficiently tight. That is, \eqref{eq: souce_length} guarantees that the distortion concentrates around its mean sufficiently sharp so that the $(\epsilon,\delta)$-accuracy requirement is satisfied, i.e., $\Pr(d(\mathbf{w}, \hat{\mathbf{w}}) \le \epsilon) \geq 1 - \delta$.

\vspace{-0.5em}
\subsection{Optimal Scheme Construction}
One way to construct such an optimal channel coding matrix $\boldsymbol{\Phi} \in \mathbb{C}^{\tilde{L}\times L}$ in Theorem \ref{theo: Opt_design} is via random orthogonalization. 
Let $\mathbf{A} \in \mathbb{C}^{\tilde{L}\times L}$ be a random matrix whose entries are drawn independently 
from a standard complex Gaussian distribution $\mathcal{CN}(0,1)$. Compute the QR decomposition $\mathbf{A} = \mathbf{Q}\mathbf{R}$, where 
$\mathbf{Q} \in \mathbb{C}^{\tilde{L}\times L}$ has orthonormal columns, i.e., $\mathbf{Q}^\mathsf{H} \mathbf{Q} = \mathbf{I}_L$. 
Set
\begin{align}
    \boldsymbol{\Phi} = \mathbf{Q}.
\end{align}
By random construction, $\boldsymbol{\Phi}$ satisfies rank condition \eqref{eq: rank} amost surely. Moreover, $\boldsymbol{\Phi}^\mathsf{H} \boldsymbol{\Phi}
    =\mathbf{Q}^\mathsf{H} \mathbf{Q}
    =\mathbf{I}_L$,
achieving the optimal coding design. 
The construction procedures are summarized in Algorithm \ref{alg:phi}. 
\begin{algorithm}[H]
\caption{Optimal Encoding Matrix Construction }
\label{alg:phi}
\begin{algorithmic}
\STATE 
\STATE {\textsc{CONSTRUCT}}$(\tilde{L}, L)$
\STATE \hspace{0.5cm} \textbf{generate } $\mathbf{A} \in \mathbb{C}^{\tilde{L}\times L}$ \textbf{ with i.i.d. } $\mathcal{CN}(0,1)$
\STATE \hspace{0.5cm} $[\mathbf{Q}, \mathbf{R}] \gets \textsc{qr}(\mathbf{A})$
\STATE \hspace{0.5cm} $\boldsymbol{\Phi} \gets \mathbf{Q}$
\STATE \hspace{0.5cm} \textbf{return } $\boldsymbol{\Phi}$
\end{algorithmic}
\end{algorithm}
\vspace{-3mm}
\vspace{-0.5em}
\section{Simulations}
This section presents numerical results comparing the proposed scheme with state-of-the-art approaches under perfect CSIT:
\begin{itemize}
    \item Proposed channel-coded AirComp adopting the optimal coding scheme.
    \item Uncoded analog AirComp.
    \item Nested lattice coding-based AirComp in \cite{9895450}, where stochastic quantization is applied to the analog source prior to transmission, and the quantization level is maximized subject to the achievable rate constraint. The parameter $\Tilde{L}$ is set as $\Tilde{L} = R L$.  
\end{itemize}
Without other specifications, the system parameters are set as follows: $K=10$, $L=5$, $P_W=1$. 
The channel coefficients are modeled as Rician fading to capture both line-of-sight (LoS) and scattered components. 
The channel between user $k$ and the receiver is modeled by $h_k = \sqrt{\frac{\kappa}{\kappa+1}} + \sqrt{\frac{1}{\kappa+1}}\iota $
where the Rician factor is set to $\kappa=5 \mathrm{dB}$ and $\iota \sim \mathcal{CN}(0,1)$. The additive channel noise is modeled as $\mathbf{z} \sim \mathcal{CN}(\mathbf{0}, N_0 \mathbf{I}_D)$ with $N_0=1$. 
All users are subject to a common transmit SNR constraint, $\rho_X \leq \mathrm{SNR\;threshold}$, and the scaling factor $P$ is chosen accordingly to ensure feasibility.

\vspace{-0.5em}
\subsection{Comparison with Benchmarks}
Fig.~\ref{fig:dist-snr} shows that the proposed channel-coded AirComp constantly outperforms both uncoded and nested lattice-based methods, with performance improving as the coding rate $R$ decreases. Moreover, the MSE decreases steadily with increasing SNR, which is consistent with Theorem~\ref{theo: Opt_design}.
In contrast, the benchmark methods exhibit limited or trivial performance gains. The uncoded analog AirComp improves with SNR, but only achieves acceptable distortion at sufficiently high SNR. 
The nested lattice-based method, restricted to digital AirComp, requires prior quantization of the analog source. However, the achievable rate is inherently limited by the constant term $\frac{1}{L}$ in \cite[(17)]{9895450} that does not diminish with SNR, which in turn restricts the quantization accuracy. 
Thus, even with sufficient transmission power, the computation accuracy is limited by the source distortion. 
Although increasing the codeword length can enhance the quantization accuracy, the source distortion, characterized in \cite[(14)]{9895450}, remains bounded by the factor $\frac{N_0}{L}$, which depends solely on the original message length. As a result, the computation accuracy does not improve with increasing codeword length.

\begin{figure}
    \centering
    \begin{minipage}[b]{0.5\textwidth}
        \centering
        \includegraphics[width=0.99\linewidth]{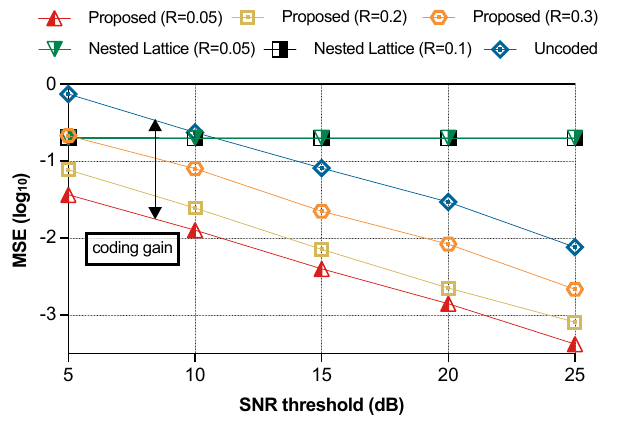}
        \vspace{-7.5mm}
        \caption{MSE comparison of the proposed method with benchmark methods under different rates and SNR constraints.}
        \vspace{-3mm}
        \label{fig:dist-snr}
    \end{minipage}
\end{figure}

\vspace{-0.5em}
\subsection{Rate Regions}
Fig.~\ref{fig:rate-region} illustrates the achievable rate regions under MSE constraints under both asymptotically infinite and finite blocklength regimes, assuming $\min_k\{\lvert h_k\rvert^2 \}=1$. It can be observed that increasing the SNR expands the achievable rate region, while stricter accuracy requirements (smaller $\epsilon$, larger $\eta$) lead to more restrictive rate constraints. These trends closely match the theoretical bounds derived in Theorems \ref{theo: Opt_design}-\ref{theo: Opt_ep_del}.
\begin{figure}[H]
    \centering
    \begin{minipage}[b]{0.5\textwidth}
        \centering
        \includegraphics[width=\linewidth]{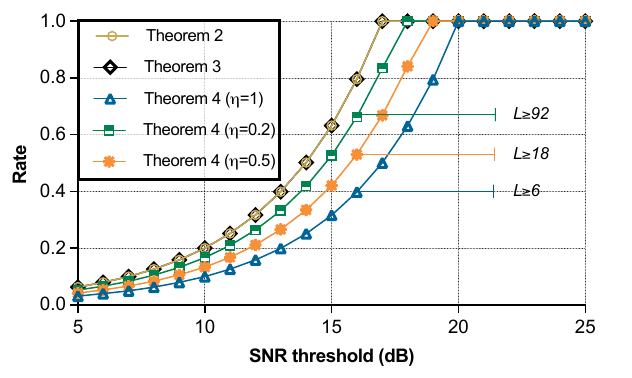}
        \vspace{-7mm}
        \caption{Rate regions of the proposed method ($\epsilon=0.02$, $\delta=0.2$). }
        \vspace{-3mm}
        \label{fig:rate-region}
    \end{minipage}
\end{figure}
\vspace{-0.5em}
\subsection{Impact of Codeword Length $\Tilde{L}$}
Fig.~\ref{fig:mse-length} displays the MSE over $500$ transmissions for different codeword lengths, setting $R=0.5$ and $\rho_X \leq 15\mathrm{dB}$. The results indicate that while the codeword length $\Tilde{L}$ presents variability, the sample mean remains nearly constant. Meanwhile, the variance decreases as the codeword length increases, indicating stronger concentration around the mean. This behavior is consistent with Theorem \ref{theo: Opt_design} and \ref{theo: asymptot_rate}, where the MSE converges to its expectation as the codeword length grows.
\begin{figure}[H]
    \centering
    \vspace{-2.5mm}
    \begin{minipage}[b]{0.5\textwidth}
        \centering
        \includegraphics[width=\linewidth]{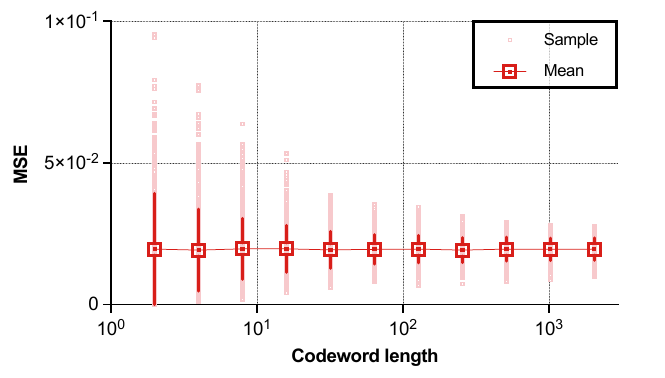}
        \vspace{-7mm}
        \caption{MSE over multiple transmissions, showing the sample mean and variance (indicated by vertical error bar).}
        \vspace{-3mm}
        \label{fig:mse-length}
    \end{minipage}
\end{figure}

\vspace{-0.5em}
\section{Conclusions}
This letter investigates channel coding for AirComp and proposes a novel coding scheme to combat channel impairments by e.g., noise and fading. By applying an identical encoding matrix across users, the proposed channel-coded AirComp preserves the aggregation structure while leveraging coding gains to reduce computation error. 
We characterize the MSE performance in both expectation and distribution, and derive the corresponding optimal coding design, showing that the computation error decreases with the coding rate. Furthermore, we establish achievable rate regions under different accuracy criteria, including asymptotically infinite and finite blocklength regimes. Simulation results validate the theoretical analysis and demonstrate significant performance gains over existing methods.
Overall, the proposed framework closes the gap between channel coding and AirComp, providing a new direction for reliable AirComp.


\bibliographystyle{IEEEtran.bst}
\bibliography{IEEEabrv,ref}

\vfill

\end{document}